\begin{document}

\newtheorem{invariant}{Invariant}
\newtheorem{theorem}{Theorem}

\clubpenalty=10000

\widowpenalty = 10000 

\title{Ad Serving Using a Compact Allocation Plan}

\numberofauthors{3}
\author{
\alignauthor
Peiji Chen\\
    \affaddr{Yahoo!Labs}\\
    \email{peiji@yahoo-inc.com}
\alignauthor
Wenjing Ma\\
    \affaddr{Yahoo!Labs}\\
    \email{wenjingm@yahoo-inc.com}
\alignauthor
Srinath Mandalapu\\
    \affaddr{Yahoo!Labs}\\
    \email{srinathm@yahoo-inc.com}
\and
\alignauthor
Chandrashekhar Nagarajan\\
    \affaddr{Yahoo!Labs}\\
    \email{cn54@yahoo-inc.com}
\alignauthor
Jayavel Shanmugasundaram\\
    \affaddr{Google}\thanks{Work was done while with Yahoo! Labs} \\ 
    \email{jayavel.shanmugasundaram@acm.org}
\alignauthor
Sergei Vassilvitskii\\
    \affaddr{Yahoo!Labs}\\
    \email{sergei@yahoo-inc.com}
\and
\alignauthor
Erik Vee\\
    \affaddr{Yahoo!Labs}\\
    \email{erikvee@yahoo-inc.com}
\alignauthor
Manfai Yu\\
    \affaddr{Yahoo!Labs}\\
    \email{manfai@yahoo-inc.com}
\alignauthor
Jason Zien \\
    \affaddr{Yahoo!Labs}\\
    \email{jasonyz@yahoo-inc.com}
}


\maketitle

\begin{abstract}
A large fraction of online display advertising is sold via guaranteed
contracts: a publisher guarantees to the advertiser a certain number
of user visits satisfying the targeting predicates of the contract.
The publisher is then tasked with solving the ad serving problem
---given a user visit, which of the thousands of matching contracts
should be displayed, so that by the expiration time every contract has
obtained the requisite number of user visits. The challenges of the
problem come from (1) the sheer size of the problem being solved, with
tens of thousands of contracts and billions of user visits, (2) the
unpredictability of user behavior, since these contracts are sold
months ahead of time, when only a forecast of user visits is available
and (3) the minute amount of resources available online, as an ad
server must respond with a matching contract in a fraction of a
second.

We present a solution to the guaranteed delivery ad serving problem using {\em compact allocation plans}.
These plans, computed offline,
can be efficiently queried by the ad server during an ad call; they
are small, using only $O(1)$ space for contract; and are stateless,
allowing for distributed serving without any central
coordination. We evaluate this approach on a
real set of user visits and guaranteed contracts and show that the
compact allocation plans are an effective way of solving the
guaranteed delivery ad serving problem.
\end{abstract}

\newcommand{\ssum}[2][]{\mbox{\ensuremath{\sum^{{#1}}_{{#2}}}\,}}
\newcommand{\neij}[1][j]{\Gamma({#1})}
\newcommand{\neii}[1][i]{\Gamma({#1})}
\newcommand{\rate}{\alpha}

\section{Introduction}
\label{sec:intro}
\newcommand{\ignore}[1]{}

A central problem facing online advertising systems is {\em ad
serving}, i.e., deciding how to rapidly match supply (user visits) and
demand (ads) in a way that meets an overall objective. Our focus in
this paper is on guaranteed delivery display advertising, a multi-billion
dollar industry, whereby advertisers can buy a fixed number of
targeted user visits (e.g., 100 million user visits by Males in
California visiting Sports web pages) for a future duration
(e.g. July-August 2011), and internet publishers {\em guarantee} these
visits months in advance. In such cases, when the user visits (supply)
actually occur, the publisher's ad server is faced with a split-second
decision of choosing an appropriate guaranteed contract (demand) in
such a way that in the aggregate, all the guarantees are met. In
addition, as part of the split-second decision, the ad server also
needs to ensure some secondary objectives such as uniformly serving
user visits to guaranteed contracts; e.g., if there is a guaranteed
contract that spans two months, the advertiser expects their ad to be
shown throughout the duration, rather than just being shown to user
visits that occur during the first few days.

The above scenario can be modeled as an assignment problem. We are
presented with a set of user visits $I$, a set of guaranteed
contracts $J$, and a set of feasible allocation edges $E \subseteq I
\times J$, which indicates which user visits are eligible to be
served by which guaranteed contracts. Each guaranteed contract $j
\in J$ has a demand $d_j$, and we are asked to allocate the user
visits to satisfy the demands. In addition, there is an objective
function over the feasible allocations, such as uniform delivery
over the duration of the contracts, and the goal is to find the
allocation that maximizes the objective function.

Solving the above problem even in an {\em offline} setting presents a
formidable set of challenges. First, the set $I$ is extremely large,
approaching tens of billions of user visits every day for large publishers; since we may
have to solve the problem for up to a year in advance in the case
of guaranteed contracts, this results in an effective set size of tens
to hundreds of trillions! (Note that the set of edges $E \subseteq I
\times J$ is even larger!) Second, we often do not know the set $I$
exactly, but only approximately; for instance, we do not know
precisely which users will visit a particular web page, but only have
a stochastic estimate of the set based on past history.

Of course, our actual situation is even more challenging because the
ad server needs to solve the above problem {\em online}, usually
within a hundred milliseconds per request, in order to meet the
latencies for end-user facing actions such as page views. Further, in
order to meet the throughput requirements, the ad server needs to
solve the online problem in parallel among hundreds to thousands of
distributed machines that do not communicate directly with each other
for each user visit,
while still ensuring that the online decisions attempt to maximize the overall objective.
Finally, due to the long-tailed nature of user
behavior, the ad server needs to explicitly deal with the fact that it
may encounter user visits that have not been seen before, but must
nevertheless ensure that the online decisions attempt to maximize the
overall objective. These are a daunting set of issues --- what do we
do?

One simple approach is to design an ad server that selects a contract based on how many impressions (i.e. user visits) have been
previously served to each eligible contract. For instance, if there are two competing contracts, one of which is severely
under-delivering and the other which is delivering well (or even slightly over-delivering), the ad server may decide to serve
the under-delivering contract. While this approach is conceptually simple and practical (in fact, it serves as the baseline in our
experiments), it has a number of subtle problems. First, the above approach requires a good definition of the notion of ``under-delivering.''
The most straightforward method is simply to assume that contracts should be served at a uniform rate --- a contract with demand of
30 million impressions and a duration of 30 days should deliver about 1 million impressions a day.
This has the advantage of making delivery smooth, when possible. But it ignores the fact that the actual traffic
varies greatly over time. Weekends have less traffic than weekdays, while the traffic at night is typically much smaller than the
traffic during the day. In addition, there may be an upcoming dearth of impressions available for contracts, which may require the
ad server to deviate from the uniform rate in order to meet the guarantee. This can occur, for instance, when some type of user visits
are promised exclusively for premium contracts at some time in the future (e.g., a sports company buys out all visits to Yahoo! Super
Bowl pages); this may require the ad server to deliver more than the uniform rate during the days before the event in order to meet
guarantees. Second, using the above approach, the ad servers need to maintain some notion of how much a contract has delivered in
(near) real-time so that they can compute under-delivery at any point in time. In cases where there are thousands of ad servers
distributed across the world, this problem becomes quite challenging, especially when user visits exhibit certain locality patterns
that cause non-uniform traffic to be sent to different ad servers.

Another approach~\cite{EC} that addresses many of the
above issues is to solve an allocation problem offline, and then send a rate-based compact allocation plan to the distributed ad
servers. The allocation plan is based on a forecast of user visits over the lifetime of the contracts, so it effectively addresses
the issue of what it means for a contract to under-deliver. (It is when it does not meet the goal at the end of its lifetime, even if
it is above or below the uniform serving rate currently). Further, for a certain class of objectives, the allocation plan is rate-based,
meaning that the ad server does not rely on any online counts of how much a contract has served in the immediate past, but is based on
a serving probability that does not require online state~\cite{EC}. However, this approach also has a set of
issues that need to be addressed. First, this approach is very sensitive to the accuracy of the forecast of user visits. For instance,
if the forecast for a certain set of user visits is twice a large, then the the serving rate will be half as large as it should be.
Second, this approach typically requires solving a non-linear convex optimization problem, which can have on the order of billions of
variables, which can be quite expensive in terms of time and resources.

\subsection{Contributions}
In order to address the above issues with ad serving, we propose the
following two-part solution. First, we introduce a novel,
lightweight algorithm for solving the offline allocation problem.
The algorithm, which we call High Water Mark (HWM), produces a
compact allocation plan, requiring O(1) numbers per contract.
Further, it is stateless and rate-based, which ensures that multiple
ad servers can execute the algorithm in parallel without any
bottlenecks. Although it does not have the strong theoretical
guarantees of ~\cite{EC}, it has the advantage of quick
optimization. (Finding an allocation plan takes only a few minutes,
even with a billion arcs in the demand-supply forecast graph.)
Second, we introduce a rapid feedback loop from the ad server to the
optimization system, which reacts to recent aggregate delivery
statistics in order to correct for any supply forecasting errors. In
fact, we can even provide theoretical bounds on how the frequency of
feedback enables the HWM algorithm to correct for supply forecasting
errors (these bounds are also borne out experimentally). Thus, the
feedback loop, coupled with the quick HWM optimization, enables the
ad serving system to quickly react to unforeseen effects, while
still effectively addressing supply and demand constraints that vary
over the duration of the contracts.

In addition, we briefly describe an implementation of a variant of
the algorithm of~\cite{EC} as a point of comparison, which we call
the {\em DUAL algorithm}.

We present a thorough evaluation of the HWM algorithm, comparing it
to a real world baseline which is currently responsible for serving
a large fraction of Yahoo!'s guaranteed contracts. Unlike the HWM
algorithm, which relies on forecasts of user supply to make
allocation decisions, the baseline algorithm is a well-tuned
feedback based approach. At a very high level, it increases the
serving rate for the underdelivering contracts, and, vice-versa,
decreases it for those serving too aggressively. We show that the
proposed algorithm greatly improves overall delivery of the
contracts, while maintain a smooth serving rate.

We further present experiments comparing HWM and DUAL.
We see that the two perform effectively the same--- the
under-delivery for HWM is on par with DUAL, and its smoothness is
actually better. Thus, HWM is a good choice in practice, due to its
lightweight implementation and quick optimization cycle time.

\subsection{Paper Outline}
We formally define the guaranteed delivery allocation problem in
Section \ref{sec:model} and present an overview and the system
architecture in Section \ref{sec:asp}. We state the proposed HWM
offline algorithm along with the online HWM allocation algorithms in
Section~\ref{sec:offlineAlgo}. In Section~\ref{sec:exp} we describe
our experiment and metrics.  In Section~\ref{sec:hwm and base}, we
thoroughly evaluate the proposed approach against a real world
production baseline. Then, in Section~\ref{sec:hwm and dual}, we
compare HWM and DUAL. We review the related work in Section
\ref{sec:related} and conclude with some open problems in
Section~\ref{sec:conclusion}.

\section{Model and Problem Statement}
\label{sec:model}

We begin by providing a detailed example of the guaranteed delivery display advertising problem.
Recall that we are presented with a set of
advertisers, each aiming to reach a specific number of user visits
(contract demand) that fall under the targeting parameters.

Consider the problem pictorially represented in Figure~\ref{fig:example}. There are three advertisers, targeting Males,
users from California, and users with Age in the 5th bucket. There are
6 different kinds of users, all of them satisfying the age
constraint. Some of the users are known to be from California,
Washington or Nevada; while for others their location is
unknown. Similarly, some the publisher knows are male, while for
others the gender is unknown. Each node on the supply side is
annotated with the total forecasted number of visits, for example we
expect to see 100,000 user visits from Californians with age in the
5th bucket and unknown gender. On the demand side each contract is
annotated with the total number of user visits guaranteed to it by the
publisher. An edge between a supply node $i$ and a demand node $j$
indicates that supply $i$ is {\em eligible} for $j$, that is it
satisfies all of the targeting criteria.

\begin{figure}[t]
\centering
\includegraphics[width=3in]{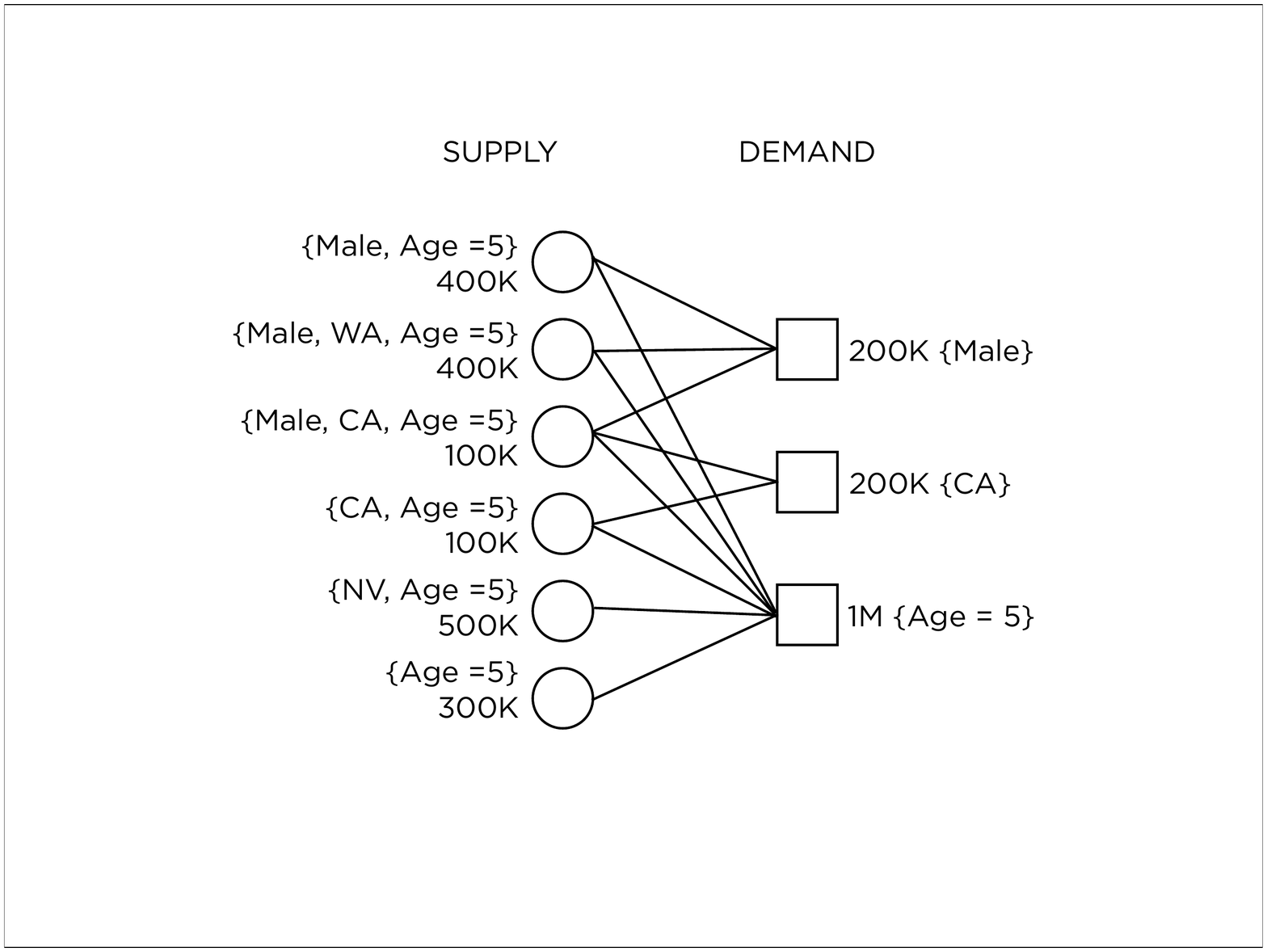}
\caption{An example allocation problem. Each user (left) is connected to the contracts that it is eligible for (right). }
\label{fig:example}
\end{figure}

The ad server must find an allocation of each supply node to the
demand. For example, to satisfy the second contract both the third
(Male, CA, Age = 5) and the fourth (CA, Age = 5) nodes must be fully
allocated to the second contract. When a Male user from
California with Age=5 arrives it is eligible for all three of the
contracts. However, it must be allocated to the second contract;
otherwise the second contract will not reach its desired demand and
the publisher will be hit with underdelivery penalties.

We note that while the above example is a small illustration, the
actual problem is much bigger in at least three dimensions. First, the
number of user visits is of the order of billions per {\em day}, and
guaranteed contracts can be sold several {\em months} in
advance. Second, the number of attributes per user visit is of the order of
hundreds, including demographic information such as age, gender, location,
explicit interest, inferred interests, recent browsing
behavior, and so on; page information such as the topic and ad-size; and
time information including date, user time and global time.
Finally, the number of guaranteed contracts is of
the order of tens to hundreds of thousands for large
publishers. Consequently, the scale of the graph --- both in terms of
supply and demand --- is quite large.  Also, in practice, the
advertiser targeting constraints are complex boolean
expressions on the attributes of the supply.

\subsection{Problem Statement}
\label{sec:prob statement}
 Let $I$ be the set of [forecasted] user
visits, and $J$ be the set of contracts. We model the allocation
problem as a bipartite graph $G = (I \cup J, E)$ where there is an
edge $(i,j) \in E$ if the user visit $i$ can be used to satisfy
contract $j$. The bipartite graph encodes the contention between
contracts, for example, a node $j \in J$ with few outgoing edges
means that it is a finely targeted contract, and there are few users
satisfying the targeting constraints. Conversely, a node $i \in I$
with a high degree means that it is a highly contended user, that is
there are many contracts that wish to advertise to this particular
user.

In addition to the targeting constraints, each contract $j \in J$
specifies its desired demand $d_j$. Likewise, a user may have
multiple identical visits, so each node $i \in I$ is labeled with
the total supply, $s_i$.

We wish to find a feasible allocation, that is a value on each edge
$x_{ij}$, representing the fraction of supply $i$ is allocated
to contract $j$. We say that $x$ is a feasible solution if it
satisfies:
\begin{align*}\itemsep=0in
\mbox{\ \ }
        & \forall_j \ \ \ \ssum{i \in \neij} x_{ij} s_i  \ge d_j & \mbox{{\em demand constraints}} \\
        & \forall_i, \ \ \ssum{j\in \neii} x_{ij} \leq 1 & \mbox{{\em supply constraints}}\\
        & \forall_{(i,j)\in E},\ \  x_{ij} \geq  0 & \mbox{{\em non-negativity constraints}},
\end{align*}
where following standard notation, $\neii$ are the neighbors of $i$:
$\neii = \{j : (i,j) \in E \}$, and similarly for $\neij$.

Although we do not explicitly require it, one important aspect of serving is
smoothness of delivery.  Advertisers do not want their entire demand to be served in a single
hour (nor do users!), even if this technical satisfies the demand.  Thus, any serving method
should roughly serve an equal number of impression each day for a given contract, subject
to natural variations in contention, sell-through rate, and overall traffic patterns.

There are, of course, potentially many additional secondary considerations.
For a longer description of the formal model and possible competing objectives of the offline
optimization system see \cite{Informs,CIKM,Ghosh,KDD}.

\section{Solution Overview and System Architecture}
\label{sec:asp}

Before laying out our solution, we first discuss several alternatives.

Perhaps the simplest ad serving solution is to randomly flip an
unbiased coin between all the matching contracts. This solution is
fast, space-efficient, has a large throughput, requires no
coordination, and generalizes to previously unseen input. Unfortunately,
what it gains in speed and flexibility, it loses in accuracy. Indeed,
because this approach makes no distinction between contracts during
contract selection, it will typically lead to severe under-delivery
and thus loss of revenue. For instance, in the example in
Figure~\ref{fig:example}, all Males in CA in Age Group 5 should be
allocated to the second contract even though it is eligible for all
three contracts, or the second contract will
under-deliver.

An alternative solution that tries to maximize accuracy is as
follows. Solve the allocation problem offline, and remember the
allocation values $x_{ij}$ on all the arcs of the bipartite graphs in
the ad server machines. Then, given a user visit, the ad server can
simply look up the allocation and assign it to the appropriate
contract. While this would certainly be accurate (it
implements the optimal allocation), it is very memory intensive given
the large number of nodes and edges that need to be stored. Further,
and perhaps more importantly, it is not generalizable. Specifically,
it is impractical to assume that we can predict all the possible user
visit types that could {\em possibly} occur (recall that it is a
combination of dynamic user characteristics such as interests, along
with their IP location, pages they visit, etc.) and solve an
optimization problem for all such visits. As an illustration, consider
again the example in Figure \ref{fig:example}, and a Female user from
California with Age=5. Since this type of user was {\em not}
considered in computing the offline allocation, there are no $x_{ij}$
values present. Therefore, while there are two eligible contracts, the
ad server has no information as to which contract should be shown. In
this example we would prefer to show the ``CA" contract, since it is
harder to satisfy.  However, this information is not easy to elicit
from the $x_{ij}$ values.

Intuitively, we want to get the best of both approaches, obtaining the
accuracy of the second approach, while retaining the simplicity and
scalability of the first approach. In Section~\ref{sec:offlineAlgo}, we present the
overview of a solution that achieves these goals.

\subsection{System Architecture}

Figure~\ref{fig:arch} shows the proposed system architecture. As
shown, there is an offline component called the Allocation Plan
Generator that takes in a forecast of user visits and booked
contracts, and solves the allocation problem. We solve the problem on a {\em sample} of user visits, which makes computing the offline solution
of the problem fast and practical.  The resulting allocation plan is sent
to all the ad server machines.

\begin{figure}[t]
\centering
\includegraphics[width=3in]{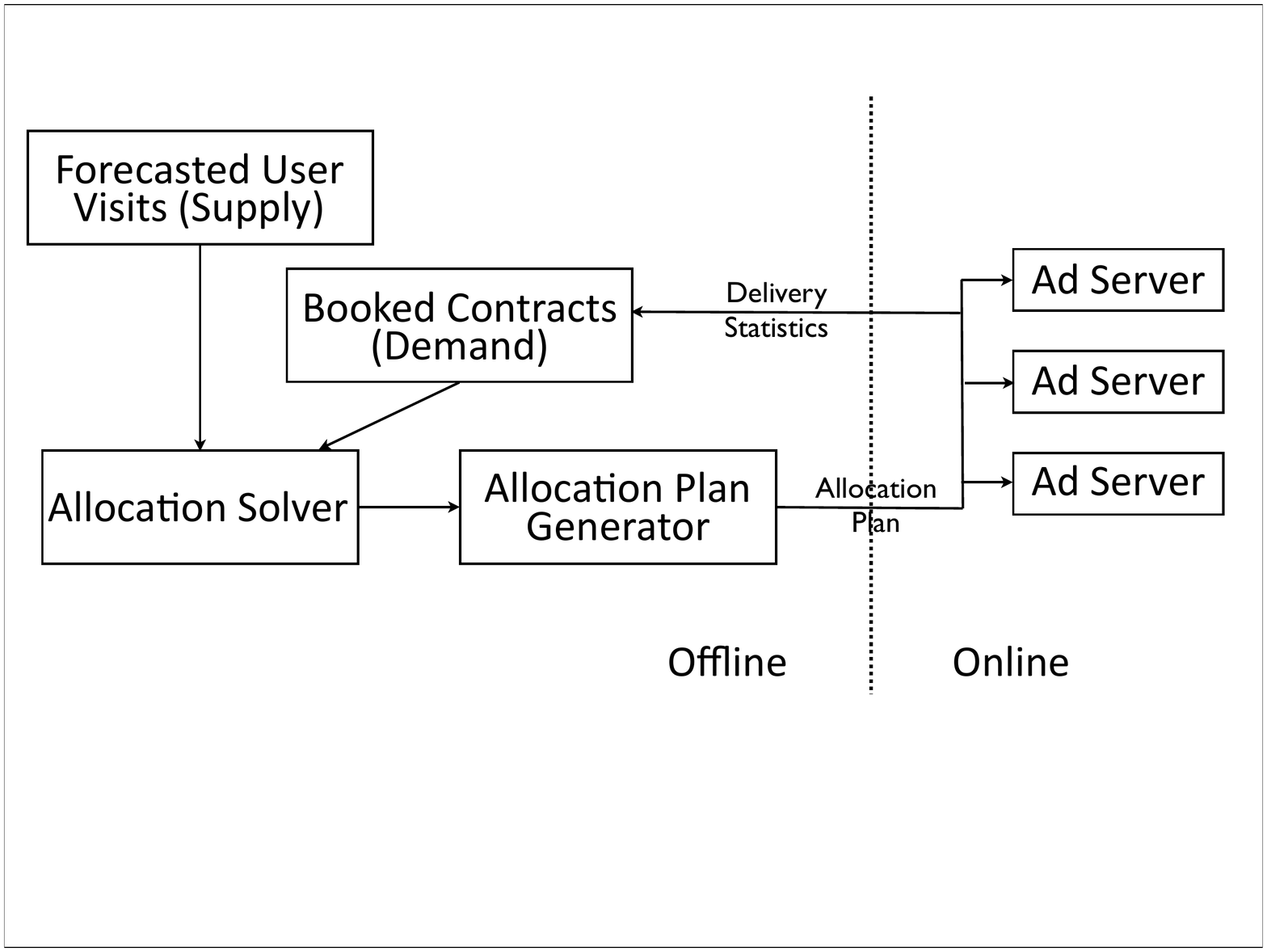}
\caption{Proposed System Architecture}
\label{fig:arch}
\end{figure}

Each ad server machine loads up the allocation plan at start-up time,
and stores the plan in memory. Then, whenever it needs to process a
user visit, it first finds the matching contracts, and then uses the allocation
plan to select a contract to show. Since the latter phase
is local (it only involves the user visit and
its matching contracts), there is no need for the ad server machines
to communicate with each other during serving, nor is there any need
for an ad server machine to maintain global state or even local
counters!

Further, as shown in the figure, there is a feedback loop
from the Ad Server to the Booked Contract Demand. Specifically,
the demand of a contract is adjusted to account for the actual
delivered impressions.
Since there can be errors in the forecast, the Allocation Plan
Generator is re-run periodically to correct for any forecast errors
based on this actual remaining demand
and a new allocation plan is then distributed to the ad servers.

In the subsequent section of the paper, we elaborate on the Allocation
Plan Generator and the Ad Server components in more detail. (A more
detailed discussion of the User Visit Forecasting system is beyond the
scope of this paper.  See~\cite{sf_paper} for a description of one such possible system.)

\section{Rate-Based Algorithms}
\label{sec:offlineAlgo}

In this section, we describe two rate-based serving algorithms.  The
first is the {\em High Water Mark Algorithm} (HWM).  The second is
an implementation of the dual-based method of~\cite{EC}, which we
will refer to as {\em DUAL}.

Both algorithms involve two phases. The offline phase, described in
Section~\ref{sec:offline}, takes as input the demand-supply forecast
graph and produces an allocation plan.  This allocation plan is just
a few numbers per contract, independent of the number of
impressions.
The online phase, described in
Section~\ref{sec:online}, repeatedly takes as input a user visit and
the set of eligible contracts for that user visit, and decides which
ad to serve based on the allocation plan.

The HWM algorithm is lightweight and fast, even in the offline
phase.  The DUAL algorithm, while lightweight and fast during the
online phase, is not nearly as fast in the offline phase.  (In our
experience, it is more than an order of magnitude slower.)  However,
it has the advantage of yielding provably optimal allocation during
serving, assuming that forecasts are perfect.  Of course, forecasts
are far from perfect in practice.  Thus, the benefits of allocation
using DUAL are somewhat mitigated.

The key to having a compact allocation plan is having a way to infer the the $x_{ij}$
values at serving time without explicitly storing them.  This can be done by creating
an $\rate_j$ fraction which represents the fraction of demand to be give to the contract,
and  which is the same for all neighboring supply nodes of a contract.  The mathematical basis
of this representation comes from the fact that an optimal solution can be represented
using these $\rate_j$ values~\cite{EC}.  Although Linear Programming solvers can be
used to find an optimal solution, due to the size of the problem and time constraints,
we describe a much faster heuristic.

\subsection{HWM Algorithm}
\label{sec:offline} Offline optimization for HWM uses a simple, but
surprisingly effective heuristic for generating
the allocation plan.
The algorithm generates a serving rate for each contract $j$, denoted $\rate_j$, together with an
{\em allocation order}.  The allocation order is set so that a contract $j$ with less eligible supply (i.e.
$\sum_{i\in\neij} s_i$) has higher priority than those with more eligible supply.  By labeling each
contract with just two numbers, the HWM algorithm creates a compact and robust allocation plan.

During online serving, every contract gets an $\rate_j$ fraction of
every impression, unless this would mean giving more than 100\% of
the impression away.  In this case, ties are broken in allocation
order: a contract $j$ coming first in the allocation order will get
its $\rate_j$ fraction, while a contract $j'$ coming later will get
whatever fraction is left, up to $\rate_{j'}$  (possibly getting
nothing for some impressions).

The algorithm itself labels every contract $j$ with its eligible supply, denoted $S_j$, which is used to
determine the allocation order; contracts with smaller $S_j$ values come earlier in the allocation order.
In order to determine the serving rate, $\rate_j$ for each $j$, the HWM algorithm does the following steps.
\begin{enumerate}\itemsep=0in
\item Initialize remaining supply $r_i = s_i$ for all $i$.
\item For each contract $j$, in allocation order, do:
    \begin{enumerate}\itemsep=0in
    \item Solve the following for $\rate_j$:
        $$\sum_{i\in\neij} \min\{ r_i , s_i \rate_j\} = d_j , $$
        setting $\rate_j = 1$ if there is no solution.
    \item Update $r_i = r_i - \min\{ r_i, s_i \rate_j\}$ for all $i\in\neij$.
    \end{enumerate}
\end{enumerate}
Notice that it is a simple manner to choose the allocation order using a different heuristic.  In our experience, the available
supply works quite well.  Contracts that have very little time left will also have very little supply and naturally become high
priority.  Likewise, highly targeted contracts also tend to have much less inventory.  This generally means they have less flexibility
in which user visits they can be served, and they are treated with higher priority.
Contrast this with a large, untargeted contract that can easily forego certain types of users visits when other highly targeted (and generally
more expensive) contracts may need those visits in order to satisfy their demands.

The compact allocation plan produced by the HWM algorithm is shown in Figure~\ref{fig:plan}.
Note that only the right hand side of the graph is sent to the Ad Servers.
In this example the allocation plan results in a feasible solution.
We note that the allocation plan is highly dependent on the supply forecast,
and an incorrect supply forecast may yield suboptimal results. We will explore this issue in detail in Section \ref{sec:robust}.

\begin{figure}[t]
\centering
\includegraphics[width=3in]{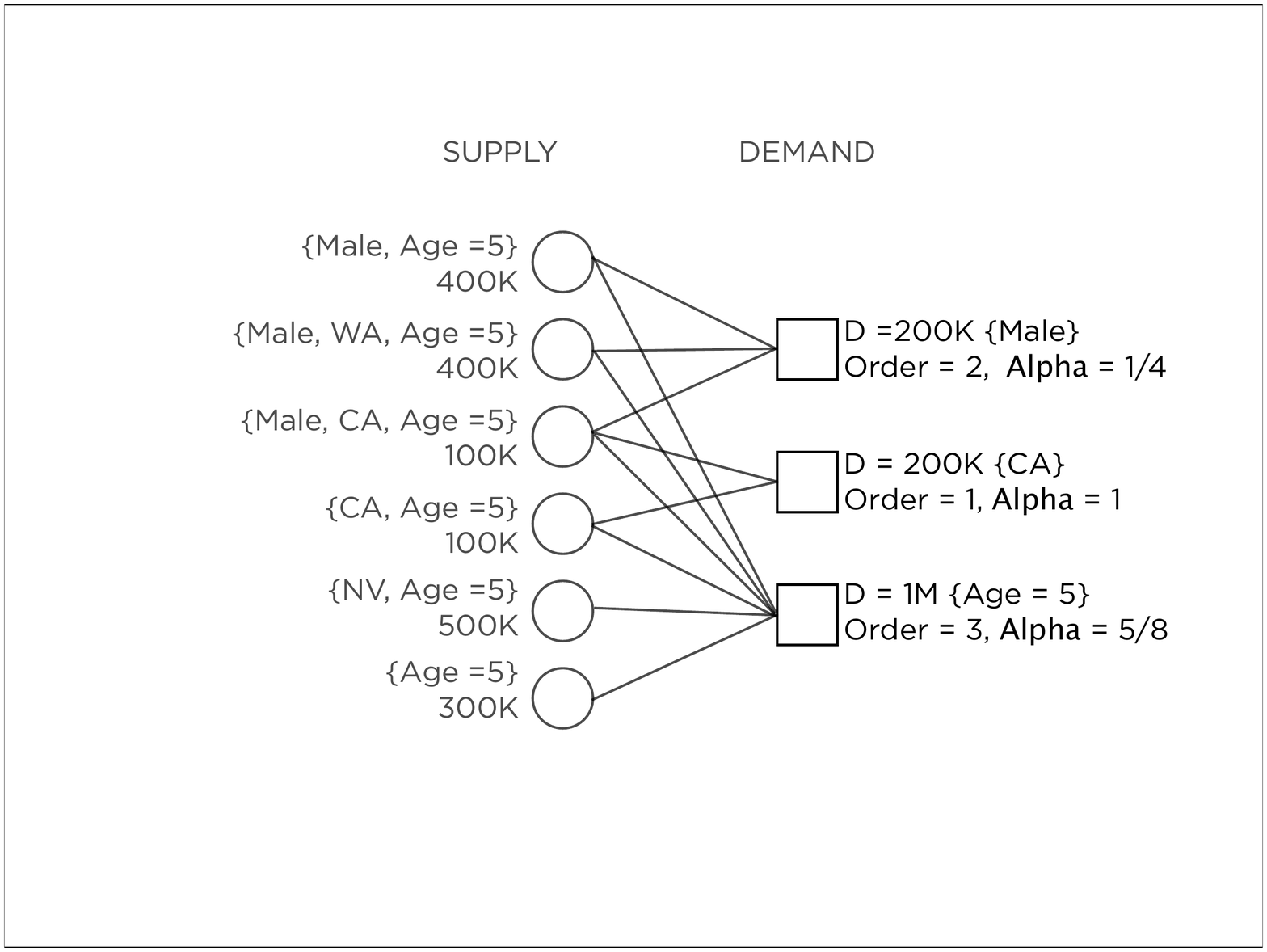}
\caption{An example compact allocation plan with the complete supply-demand graph. Note that only the data on the RHS of the figure is sent to the ad servers.}
\label{fig:plan}
\end{figure}

\subsubsection{Online Evaluation for HWM} \label{sec:online} Given an
allocation plan, which consists of the allocation order and serving
rate $\rate_j$ for each contract $j$, the role of the ad server is
to select one of the matching contracts. We present the algorithm
that the ad server follows below:
\begin{enumerate}\itemsep=0in
\item Given an impression $i$, let $J = \{c_1, c_2, \ldots, c_{|J|} \}$ be the set of matching contracts listed in the allocation order.
\item If $\sum_{j = 1}^{|J|} \rate_j > 1$, let $\ell$ be the maximum value so that $\sum_{j=1}^{\ell} \rate_{j} \leq 1$. Finally, let $\rate'_{\ell+1} = 1 - \sum_{j=1}^{\ell} \rate_j$. Note that by definition of $\ell$, $\rate'_{\ell+1} < \rate_{\ell + 1}$.
\item Select a contract $j \in [1, \ell]$ with probability $\rate_j$ and the contract $\ell+1$ with probability $\rate'_{\ell + 1}$. Note that in the case that $\sum_{j=1}^{|J|} \rate_j < 1$ with some probability no contract is selected.
\end{enumerate}

Consider the running example from Figure \ref{fig:plan}.
In the case an impression of type \texttt{\{CA, Age = 5\}} arrives, $\ell = 1$
and the middle contract with allocation order $1$ is always selected, since its serving rate is $1$.
On the other hand, suppose an impression of type \texttt{\{Male, Age = 5\}} arrives.  Then with probability $\nicefrac{1}{4}$ it is allocated to contract 1, and with probability $\nicefrac{5}{8}$ it is allocated to contract 3, with the remaining probability ($\nicefrac{1}{8}$) it is left unallocated.
(In case no contract is selected, the impression is then auctioned off in the Non-Guaranteed Marketplace.)

\subsection{The DUAL Algorithm}
The DUAL algorithm replaces the demand constraints described in
Section~\ref{sec:prob statement} with the following:
\begin{align*}\itemsep=0in
\mbox{\ \ }
        & \forall_j \ \ \ \ssum{i \in \neij} x_{ij} s_i +u_j \ge d_j &
\end{align*}
where we think of $u_j$ as the under-delivery for contract $j$.  We
additionally add the constraint that $u_j \ge 0$.  Our objective is
then
\begin{align*}\itemsep=0in
\mbox{Minimize\ \ } \ssum{j} \ssum{i \in \neij} s_i (x_{ij} -
\theta_{j})^2/\theta_{j} +
    \ssum{j} p_j u_j
\end{align*}
where $\theta_j$ and $p_j$ are fixed.  The first term attempts to
minimize the distance of the allocation from some target,
$\theta_j$.  We set $\theta_j$ to be the total demand of contract
$j$ divided by its total (forecasted) supply.  In this way, we
attempt to make the overall mix of impressions as uniform as
possible.  (See~\cite{Informs,CIKM,Ghosh,KDD} for more discussion
on this.)

The second term represents the under-delivery penalty for the
contracts.  We set $p_j$ (the penalty value per impression for
contract $j$) to be 10 for each contract.  This heuristically chosen
value performs well in practice.

In order to produce an allocation plan, DUAL simply reports the dual
values of the modified demand constraints for an optimal solution
--- in practice, these are actually the duals for an approximately
optimal solution.  Note that there is one dual value per contract.

\subsubsection{Online Evaluation for DUAL}
For each impression, we find the set of eligible contract.  This,
together with the dual values for their respective demand
constraints, is enough to compute the value of the primal solution
on the corresponding edges (i.e. the $x_{ij}$ values) computed
during the offline phase.  In fact, if the set of impressions seen
during serving is the same as the set of impressions used during the
offline phase, this method will faithfully reproduce exactly the
same allocation. Mathematically, define $g_{ij}(z) = \max\{0,
\theta_j (1+z)\}$.  For an impression $i$, let $C_i$ be the set of
eligible contracts, and let $\alpha_j$ be the dual of the modified
demand constraint for each $j\in C_i$.  Then we first solve the
following for $X$:
\begin{align*}\itemsep=0in
\ssum{j\in C_i} g_{ij}(\alpha_j - X) = 1\ .
\end{align*}
Then, we set $\beta = \max\{0, X\}$.  (It turns out that $\beta$ is
the dual value for the supply constraint for impression $i$.)
Finally, we set $x_{ij} = g_{ij}(\alpha_j - \beta)$ for all $j\in
C_i$; note that $x_{ij} = 0$ for all $j\notin C_i$.
As shown in~\cite{EC}, this reproduces the primal solution.

As with HWM, the fractional solution is interpreted as a probability
for serving.  That is, contract $j$ is served to impression $i$ with
probability $x_{ij}$.

\subsection{Robustness to Forecast Errors}
\label{sec:robust} Since rate-based algorithms produce serving rates
rather than absolute goals, it becomes much more robust to load
balancing, server failures, and so on.  However, it becomes more
sensitive to forecast errors.  For example, if the forecast is
double the true value, the serving rate is half what it should be,
roughly speaking. In this section, we show that in an idealized
setting, frequent re-optimization mitigates this problem greatly.
Note that this simple analysis applies to both HWM and DUAL.

Let us take a simple example in which our forecast is much higher
than reality for all contracts. Initially, all of the contracts are
underdelivering, since the actual supply is lower than the
forecasted supply, the $\rate_j$ given in the allocation plan is too
low. As time passes, however, the rate begins to increase. This
increase is not because the supply forecast errors are recognized,
rather, as the expiration date for the contract approaches, the
urgency with which impressions should be served to the particular
contract increases.

\begin{table}[t]
\begin{center}
\caption{Increasing serving rate due to supply forecast errors}
\vspace{2pt}
{%
\begin{tabular}{|l|c|c|c|c|}
\hline
\multirow{2}{*}{Day} & Predicted & Remaining & Serving & Actual \\ & Supply & Demand & Rate & Delivery \\
\hline
1 & 5M & 2.5M & 0.50 & 0.4M \\
2 & 4M &    2.1M &  0.525 &     0.42M \\
3 & 3M & 1.68M &    0.56 &  0.45M \\
4 & 2M &    1.23M & 0.62 &  0.49M \\
5 & 1M & 0.74M &    0.74 &  0.59M\\
6 & &   0.15M & &  \\
\hline
\end{tabular}}
\label{tab:reopt}
\end{center}
\end{table}

To illustrate this point, consider a five day contract $j$ with forecasted supply of $1M$ impressions per day, and
total demand of 2.5M impressions. If this is the only contract in the system, we should serve it at the rate of
$\rate_j = \nicefrac{2.5M}{5M} = 0.5$. If the supply forecasts are perfect, then the contract will finish delivering exactly at the end of the last day. On the other hand, suppose that the actual supply is only 800K impressions per day (a 20\% supply forecasting error rate).  Then at the end of the first day the new serving rate would be recomputed to = $\rate_j = 2.1M/4M = 0.525$.  Continuing with the example, we observe the behavior in Table \ref{tab:reopt}, which shows that after 5 days the contract would have underdelivered by $\nicefrac{.15M}{2.5M} = 6\%$.  Note that without this re-optimization, the underdelivery would be exactly 20\%, thus frequent reoptimizations help mitigate supply forecast errors.

We formalize the example above in the following theorem.
\begin{theorem}
Suppose we have $k$ optimization cycles, the supply forecast error rate (defined
to be 1 minus the ratio of real supply to predicted supply) is $r$, and the contract is never infeasible.
In the case that $r>0$, the underdelivery is positive and bounded above by $\frac{r+r^2}{k^{1-r}}$.
In the case that $r<0$, the overdelivery is positive and bounded above by $\frac{|r|}{k^{1-r}}$.
\end{theorem}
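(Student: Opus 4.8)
The plan is to track the contract's remaining demand across the $k$ re-optimization cycles in the idealized setting of the worked example --- the forecast predicts an equal amount of supply in each of the $k$ cycles, the realized supply in every cycle is a $(1-r)$ fraction of the forecast (only the demand, not the supply forecast, is corrected between cycles), $|r|<1$, and the serving rate is never capped at $1$, which is the content of ``never infeasible'' and is the substantive hypothesis when $r>0$ --- and then to estimate the resulting product. Let $d_t$ denote the remaining demand at the start of cycle $t$, so $d_1$ is the booked demand $D$. In cycle $t$ the plan sets $\rate_t=d_t/R_t$ with $R_t$ the remaining predicted supply; since the forecast is uniform, cycle $t$'s forecast is a $1/(k-t+1)$ fraction of $R_t$, so the realized delivery in cycle $t$ is $\rate_t$ times $(1-r)$ times that fraction of $R_t$, namely $d_t(1-r)/(k-t+1)$. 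This gives the recurrence
$$
d_{t+1}=d_t\Bigl(1-\tfrac{1-r}{k-t+1}\Bigr),
\qquad\text{hence}\qquad
\frac{d_{k+1}}{D}=\prod_{m=1}^{k}\Bigl(1-\tfrac{1-r}{m}\Bigr)=\prod_{m=1}^{k}\tfrac{m-1+r}{m}
$$
after re-indexing by $m=k-t+1$. The $m=1$ factor equals $r$, which fixes the sign: for $r>0$ all factors are positive, so $d_{k+1}>0$ and the underdelivery is this (positive) fraction of $D$; for $r<0$ exactly one factor is negative (here $0<|r|<1$ is used), so $d_{k+1}<0$, the contract is still short through cycle $k-1$ and overshoots only in the last cycle, and the overdelivery is $|r|\prod_{m=2}^{k}\tfrac{m-1-|r|}{m}$ of $D$.

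The remaining task is to estimate $\prod_{m=2}^{k}\tfrac{m-1+r}{m}$, and the key move is to peel off the telescoping main term: for $m\ge2$ write $\tfrac{m-1+r}{m}=\tfrac{m-1}{m}\bigl(1+\tfrac{r}{m-1}\bigr)$, so that $\prod_{m=2}^{k}\tfrac{m-1}{m}=\tfrac1k$ collapses and only the near-$1$ corrections remain. In the $r>0$ case the underdelivery fraction is thus $\tfrac rk\prod_{j=1}^{k-1}\bigl(1+\tfrac rj\bigr)$; I would pull out the $j=1$ term $1+r$ and bound the tail by $\ln(1+x)\le x$ together with $\sum_{j=2}^{k-1}\tfrac1j=H_{k-1}-1\le\ln k$ (where $H_n$ is the harmonic number), which yields $\prod_{j=1}^{k-1}(1+\tfrac rj)\le(1+r)k^{r}$, hence underdelivery $\le\tfrac{r(1+r)}{k}k^{r}=\tfrac{r+r^{2}}{k^{1-r}}$. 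In the $r<0$ case the overdelivery fraction is $\tfrac{|r|}{k}(1-|r|)\prod_{j=2}^{k-1}\bigl(1-\tfrac{|r|}{j}\bigr)$; using $1-x\le e^{-x}$ and $H_{k-1}-1\ge\ln k-1$ gives $\prod_{j=2}^{k-1}(1-\tfrac{|r|}{j})\le e^{|r|}k^{-|r|}$, and combining this with the elementary inequality $(1-x)e^{x}\le1$ leaves overdelivery $\le\tfrac{|r|}{k}k^{-|r|}=\tfrac{|r|}{k^{1-r}}$.

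The delicate point will be pinning down the \emph{exact} constant $r+r^{2}$ (not merely $O(r)$) multiplying $k^{r-1}$. A one-step induction on $k$ with the stated bound as hypothesis does not close: writing $s=1-r$, the quantity $k^{s}\prod_{m=2}^{k}\tfrac{m-s}{m}$ is strictly increasing in $k$ and converges to $1/\Gamma(1+r)$, so a recursion exhausts its entire budget. The telescoping argument sidesteps this by extracting the exact product $\prod\tfrac{m-1}{m}=1/k$ and incurring only a logarithmic loss on the residual factors, which is just tight enough --- the spare factor $1+r$ (respectively the step $(1-|r|)e^{|r|}\le1$) is precisely the slack the theorem leaves. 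The other thing to be careful about is stating the idealizing hypotheses explicitly, since the theorem formalizes the worked example rather than the full feedback system.
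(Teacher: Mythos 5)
Your proposal is correct and follows essentially the same route as the paper: the same per-cycle recurrence giving the product $\prod_{m=1}^{k}\frac{m-1+r}{m}=\frac{r}{k}\prod_{j=1}^{k-1}\bigl(1+\frac{r}{j}\bigr)$, then the same exponential/harmonic-sum bounds (peeling off the $1+r$ factor for $r>0$, and the equivalent of $(1-|r|)e^{|r|}\le 1$ for $r<0$). The only differences are in packaging, not substance.
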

\begin{proof}
In the $i$-th round, the optimizer will allocate a $\nicefrac{1}{(k-i+1)}$ fraction of
supply, of which a $(1-r)$ fraction will be delivered. Therefore, the desired demand
will decrease by a factor of $1 - \frac{1-r}{k-i+1}$. Therefore, the supply available
 in the last round (i.e. the underdelivery) is:
\begin{align*}
&\prod_{i=1}^k \left(1 - \frac{1-r}{k-i+1}\right)
     = \prod_{i=1}^k \left(\frac{k-i+r}{k-i+1}\right)
     = \frac{r}{k} \prod_{i=1}^{k-1} \left(1+\frac{r}{i}\right)
\end{align*}
Notice that when $r>0$, this value is positive (i.e. meaning we underdeliver), while
when $r<0$, this value is negative (i.e. meaning we overdeliver).
First, consider the case that $r>0$.  We use the fact that
$\sum_{i=2}^{k-1} \frac{1}{i} \le \ln (k-1) < \ln k$.
We have
\begin{align*}
     & \frac{r}{k} \prod_{i=1}^{k-1} \left(1+\frac{r}{i}\right)
     \le \frac{r(1+r)}{k} \exp\left(\sum_{i=2}^{k-1} \frac{r}{i}\right) \\
     &\le \frac{r+r^2}{k} \exp\left(r\ln k \right)
      = \frac{r+r^2}{k^{1-r}}
\end{align*}
Now,
consider the case that $r<0$.  Here, we use the fact that $\sum_{i=1}^{k-1} \frac{1}{i} \ge \ln k$.
Hence, we have
\begin{align*}
     & \frac{|r|}{k} \prod_{i=1}^{k-1} \left(1+\frac{r}{i}\right)
     \le \frac{|r|}{k} \exp\left(\sum_{i=1}^{k-1} \frac{r}{i}\right) \\
     &\le \frac{|r|}{k} \exp\left(r\ln k \right)
      = \frac{|r|}{k^{1-r}}
\end{align*}
\end{proof}
So even a large 2X forecast error, leading to delivery rate of 0.5 of what it should be, does not result in delivering only 50\% of
the required demand.  On a week-long contract, re-optimizing every two hours, we have $k = 84$ cycles, and our under-delivery
is only $8.2\%$.  Forecast errors in the other direction (which would lead to over-delivery) are even less severe.
A 0.5X forecast error, leading to delivery rate of twice what it should be, does not result on 100\% over-delivery.
Instead, the over-delivery is much less than 1\%.

\subsection{Feedback-Based Correction}
\label{sec:feedback}
In the previous section we showed how the supply forecast errors are mitigated by frequent reoptimization.
Although this guards against large underdeliveries, the offline optimization is slow to recognize the forecast error,
and the delivery rate increases very slowly in the beginning, even when faced with very large errors. An alternative solution is to
employ a feedback system that would recognize the errors and correct the allocations accordingly.

Such a feedback system falls squarely in the realm of control theory, and one can imagine many feedback controllers that would achieve this task.
We present a very simple protocol, which we will show performs quite well in practice.  The protocol is parameterized by two variables: $\delta$, which
controls the allowed slack in delivery, and a pair $(\beta+, \beta-)$ which control the boost to the serving rate.

If, during the lifetime of a contract,  the contract is delivering
more than $\delta$ hours behind, the feedback system increases the
reported demand left for the contract by a factor of $\beta-$. For
example, let $\delta = 12$, and consider a 7 day contract for 70M
impressions. Ideally, the contract should deliver 3M impressions by
the end of day 3. If, say, only 2M impressions were still not
delivered by noon on day 4, the feedback system would increase the
demand left (5M = 7M-2M) by a factor of $\beta+$, making it
5M$\times\beta+$.
 If on the other hand, the contract had already delivered 3M impressions by noon on day 3, the feedback system would decrease
 its demand left by a factor of $\beta-$. We will show experimentally in Section \ref{sec:exp} that this, admittedly simple,
 system performs very well in real world scenarios.

\section{Experimental Setup}
\label{sec:exp}
To validate our approach we implemented our algorithms and evaluated
them using a large snapshot of real guaranteed delivery contracts
and a log of two weeks of user visits.
We describe the setup in detail below.

\subsection{Data flow}
Figure \ref{fig:exp-setup} shows the experimental setup flow. To
simulate the offline procedures, we use a snapshot of guaranteed
contracts stored in a contract database. A supply forecasting system
using historical information produces predicted user visits matching each of
the contracts in the database. The forecasted user visits along with
the contracts are passed to a graph generator which generates a
corresponding snapshot of the supply-demand bipartite graph. Finally, the
compact allocation plan is generated as described in Section
\ref{sec:offline}.

To simulate the online environment, an ad server receives the
compact allocation produced by the offline phase. We only used one
ad server for each experiment since all the ad servers are
completely symmetric with respect to the allocation plan they
receive and the decisions they make (this symmetry is, in fact, one
of the key advantages of the proposed architecture with respect to
load balancing, etc.). A set of user visits recorded from real ad
serving logs is then streamed through the ad server, which
 uses the allocation plan
based reconstruction procedure detailed in
Section~\ref{sec:offlineAlgo} to select the appropriate contract.
Finally, we record the number of user visits allocated to each
contract.

\begin{figure}[t]
\centering
\includegraphics[width=0.4\textwidth]{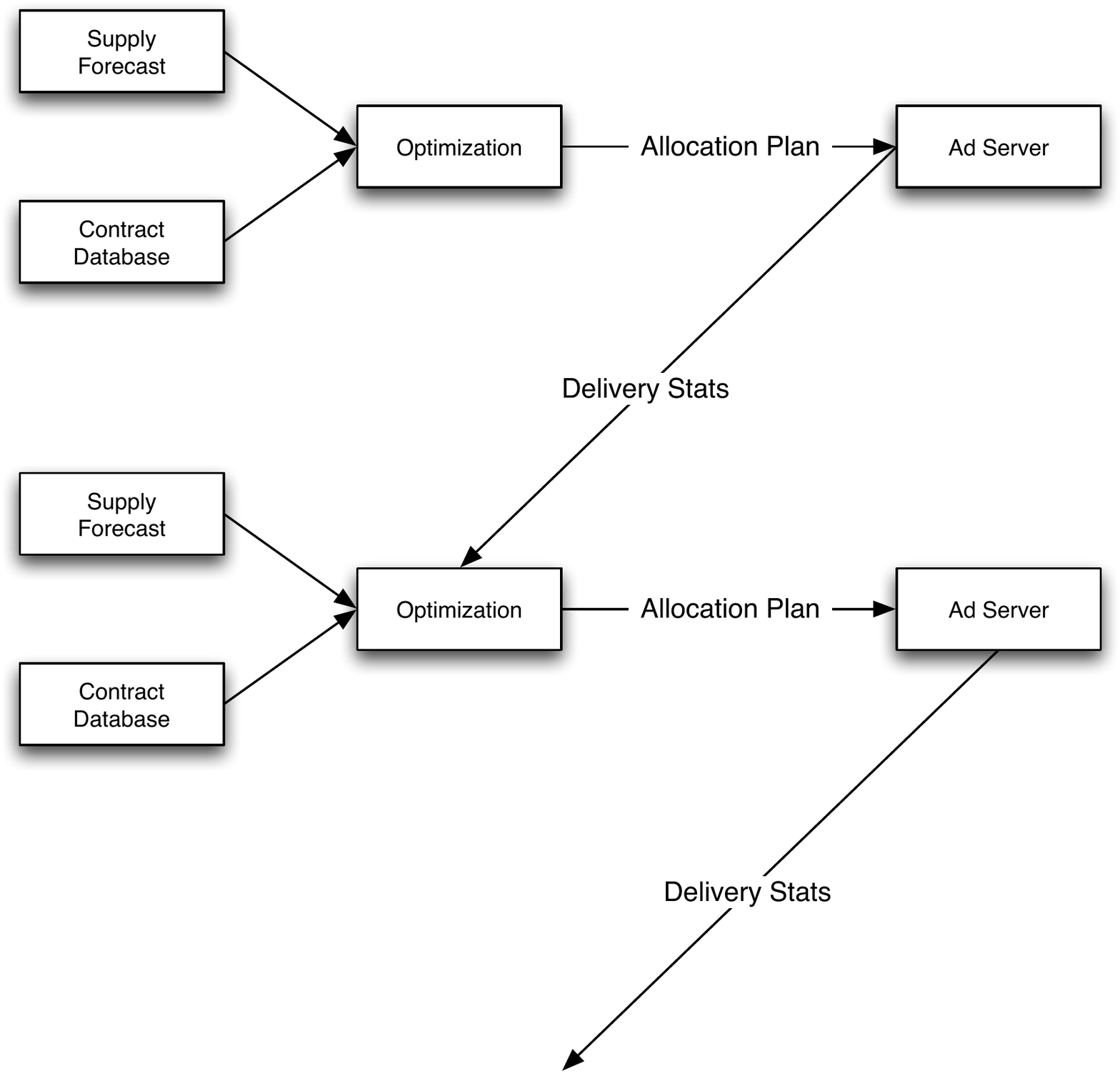}
\caption{Experimental Setup}
\label{fig:exp-setup}
\end{figure}

\subsection{Metrics} The main metric for ad serving is total
delivery rate, defined as the total number of impression delivered,
divided by the total number of impressions booked (of course for a
single contract, the number of delivered impressions is bounded by
its demand). Since the contracts are sold in a guaranteed fashion,
underdelivery represents not only unrealized revenue (since one
cannot charge for impressions not delivered), but also bad will
towards the advertisers, as they typically rely on guaranteed
delivery contracts for major marketing campaigns (e.g. new movies,
car model launches, etc.). We will measure our results as
percentage improvement in underdelivery over the baseline system.
For example, if the baseline system had underdelivered on 10\% of
impressions, and the HWM approach underdelivered on 7\%, this would
result in a 30\% improvement.

A secondary measure of the quality of the delivery is the {\em
smoothness} of the delivery. Intuitively, a week long contract
should delivery about \nicefrac{1}{7} of its impressions on every
day of the week; it should {\em not} deliver all of the impressions
in the first day (or the last day) of the campaign. Of course, due
to the interaction between different advertisers, such perfect
smoothness is not always possible, however the ad server should
strive to deliver impressions as smoothly as possible.

To measure smoothness formally, denote by $y_{j}(t)$ the total delivery to contract $j$ at time $t$, and $y^*_j(t)$ the optimal smooth delivery at time $t$.
For example, for a week long contract for 7M impressions that begins at noon on Sunday, $y^*_j(t)$ would be 1M at noon on Monday, and 5.5M at 11:59PM on Friday night.
We can then denote the smoothness of a contract at time $t$, $\sigma_j(t) = 100 \cdot \frac{y_j(t) - y^*_j(t)}{d_j(t)}$. Finally, to reduce smoothness to a single
number, let $\sigma^f(t)$ denote the $f$-th quantile of $\sigma_j(t)$ over all contracts $j$. We will typically be interested in smoothness at the 75th and
95th percentiles. Finally, we will report $\sigma^{f} = \max_{t} \sigma^f(t)$.
So, for example, a smoothness of score of 4.5 at 95th percentile means that 95\% of contracts were never over-delivering by more than 4.5\% compared to the optimal
smooth delivery.

\section{Comparison of HWM and Baseline}
\label{sec:hwm and base}
 Our first set of experiments compare a
production-quality baseline to the HWM algorithm.

\subsection{Data} We used a two week period as
our simulated testing time period. From the many real contract
active during that time window, we selected
a representative set of 1986, 
which included a mix of day long, multi day and multi week long contracts; ranging from very targeted to broad,
untargeted contracts.

On the impression side, we use actual ad serving logs to produce a set of real impressions for our test.
For scalability, we downsample at a rate greater than 1\%, resulting in 20M impressions total.
(The real traffic rate is impossible to simulate without utilizing hundreds of machines.)
Even at such a high sampling rate, the full two week simulation requires about tens hours of processing to evaluate.

\subsection{Baseline} 
We used the current guaranteed delivery system, named {\em Base}, as a baseline for our experiments.
This is a system that
aims to maintain a smooth temporal delivery on all of the contracts. At a high level, it frequently
checks and increases the serving rate of contracts that are falling behind their ideal delivery numbers,
and decreases the rate of those that have gotten too far ahead. The system is very robust to traffic changes and
supply forecast errors, but, in contrast to the proposed HWM approach, it is reactive, rather than proactive.
For example, if a certain slice of inventory is sold out in the coming days, the offline optimization will recognize the fact
and may frontload a competing contract targeting that area. The reactive system, on the other hand, will not recognize the error
until it is too late.

\subsection{Experimental Results: Regular Serving}
The experimental results are shown in Table \ref{tab:results-reg}.
As we can see, the overall delivery is much better with the HWM
approach, as the underdelivery rate falls by 53\%. However, the
extra delivery comes at the cost of smoothness
--- of the 25\% of contracts that frontloaded, most did so by almost 3
times the frontloading rate of Base. Recall that Base specifically
targets perfectly smooth delivery.  So although the smoothness of
HWM here is higher than ideal, it is still not dramatically
unacceptable.
\begin{table}[h]
\begin{center}
\caption{Total delivery for Base and HWM approaches. Note that HWM reduces underdelivery by 53\%, but has worse overall smoothness. }{%
\begin{tabular}{|l|c|c|c|}

\hline
Algorithm &  Delivery Improvement & $\sigma^{75}$ & $\sigma^{95}$ \\
\hline
\hline
Base & --- & --- & --- \\
HWM & 53\% & +288\% & +634\% \\
\hline
\end{tabular}}
\label{tab:results-reg}
\end{center}
\end{table}

\subsection{Experimental Results: Supply Forecast Errors}
\label{sec:sf-error}
The main culprit behind the inferior smoothness are the supply forecast errors. Recall, that the allocation plan,
which determines the serving rates, $\rate_j$ for each contract $j$, is computed offline using a forecast of
the expected supply. In the case that the forecast is lower than the actual supply, the system will use a higher rate of
serving than necessary, resulting in frontloading, that is delivering impressions ahead of the smooth linear goal. On the other
hand, if the forecast is higher than actual supply, this will result in backloading, and potential underdelivery of the contracts.

The description of the supply forecasting system is beyond the scope of this work, but the distribution of errors is shown in
Figure \ref{fig:sf-error}. (The scale of the y-axis is suppressed for confidentiality reasons.)
Although many of the forecasts are accurate, we can see that a significant fraction have predictions that are inaccurate enough
to potentially cause serving errors.

\begin{figure}[h]
\centering
\includegraphics[width=0.45\textwidth]{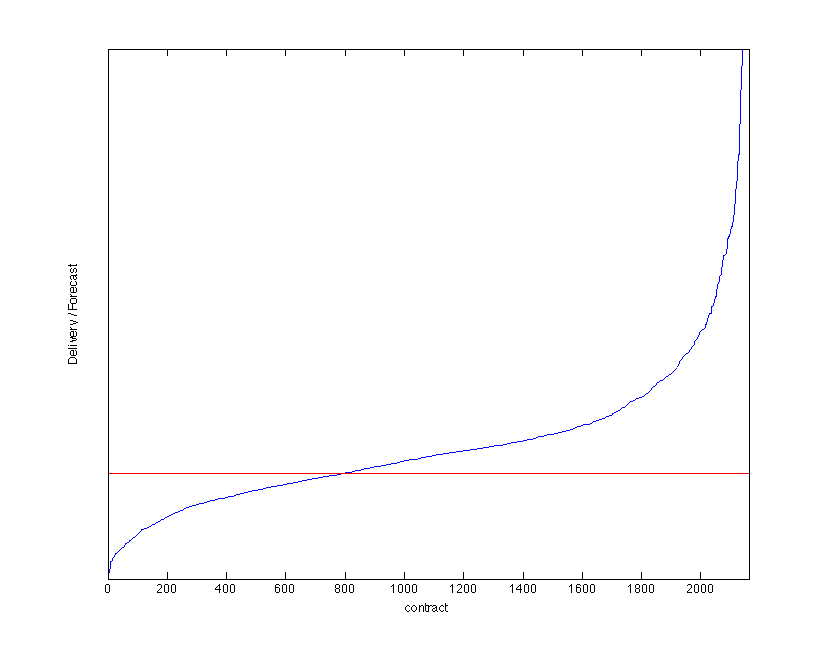}
\caption{A distribution of supply forecast errors..}
\label{fig:sf-error}
\end{figure}

\subsubsection{Smoothness}

To improve the smoothness of delivery we consider the feedback
system described in Section \ref{sec:feedback}. We set the $\delta$
parameter to 4 hours, and $\beta-$ to $10$, implying that a contract
that frontloads by more than 4 hours will get its delivery rate
reduced to 10\% of its original value. This effectively reduces  the
frontloading performed by the system. We denote this algorithm HWM+,
and show the results in Table \ref{tab:results-tw}.

\begin{table}[h]
\begin{center}
\caption{Total delivery for Base and HWM approaches. Note that HWM+ obtains comparable smoothness to the baseline,
while still drastically improving overall delivery. }{%
\begin{tabular}{|l|c|c|c|}

\hline
Algorithm & Delivery Improvement & $\sigma^{75}$ & $\sigma^{95}$ \\
\hline
\hline
Base & --- & --- & --- \\
HWM & 53\% & +288\% & +634\% \\
HWM+ & 40\% & -66.0\% & +20.9\% \\
\hline
\end{tabular}}
\label{tab:results-tw}
\end{center}
\end{table}

Observe that HWM+ slightly decreased total delivey compared to HWM. 
However, it still dramatically improves delivery compared to the current baseline system, and
its smoothness is on par --- for the majority of the contracts the delivery is even smoother.

\subsubsection{Stress Test}
We have already observed that an accurate supply forecast is an integral component of the system.
As we saw in Section~\ref{sec:robust}, the HWM algorithm, when allowed to periodically re-optimize based on the latest delivery
statistics, gracefully handles forecast errors.
In this subsection, we empirically verify that the system indeed performs well even with large forecasting errors.

We altered the supply forecast to be
biased towards underdelivery. To do this, we
effectively doubled the supply forecast for all of the contracts.  In this case, the supply forecasting system was {\em over}-predicting
the available supply on more than three-quarters of the contracts. 
Although such a high supply forecasting error rate should be unlikely to occur in practice, it is a good demonstration of the
system's robustness, even in adverse settings.

The results for the supply forecasting stress test are shown in Table \ref{tab:sfstress}. Note that since the current system, Base,
does not use supply forecasting to make its serving decisions, its delivery is unaffected. It is not surprising that the overall
delivery of HWM+ is lower; on the other hand, we observe that frequent reoptimization (the re-optimization happens 12 times per day)
limits the impact of such high supply forecast errors. Moreover, even in such a stressed state, the HWM+ algorithm performs
better than Base.

Finally, just as we used a simple feedback system to remedy frontloading errors,
we use the same system system to battle the underdelivery. Keeping $\delta$ at 4 hours, and $\beta-$ at 10, we set $\beta+$ to 1.5. In this scenario, when the
delivery is 4 hours behind (with respect to the linear goal) we artificially increase the demand of the contract by 50\%.
This increases the urgency on the contract, which leads it to deliver at a higher rate in the short term. Note that once the
total delivery is within two cycles of the linear goal, the demand is no longer artificially boosted. We call this algorithm
HWM++ and show its results in Table \ref{tab:sfstress}.  This admittedly simple feedback mechanism almost completely counteracts
the effects of the supply forecast errors, resulting in near optimal delivery.

\begin{table}[h]
\begin{center}
\caption{Total delivery for the Base and HWM approaches. Note that HWM++ obtains comparable smoothness to the baseline,
while still drastically improving overall delivery. }{%
\begin{tabular}{|l|c|c|c|}

\hline
Algorithm &  Delivery & $\sigma^{75}$ & $\sigma^{95}$ \\
& Imprv. & & \\
\hline
\hline
Base & --- & --- & --- \\
HWM & 53\% & +288\% & +634\% \\
HWM+ & 40\% & -66.0\% & +20.9\% \\
HWM
(2x SF errors) & 6\% & -5.0\% & +92.8\% \\
HWM++ (2x SF errors) & 47\% & +2.6\%\% & +116\% \\
\hline
\end{tabular}}
\label{tab:sfstress}
\end{center}
\end{table}

\section{Comparison of DUAL and HWM}
\label{sec:hwm and dual}
 We ran further experiments on DUAL
and HWM, showing that the two perform comparably.

\subsection{Data}
We used a snapshot of actual data from two different time periods on
two different areas of traffic to run our simulations.  The first
uses a week-long period of actual ad logs, downsampled at a rate of
10\%. We further used a set of real contracts active during that
time period.  The second also uses a week-long period of actual ad
logs, but downsampled at a rate of 0.25\%.  (The traffic on this
area was much greater.)

\subsection{Experimental Results}
This experiment compares the HWM and DUAL algorithms, both with and
without feedback.  Our results report the delivery and smoothness
for contracts that finished during the active time period, as well
as the smoothness of the unfinished contracts during the active
period. They are shown in Table~\ref{tab:hwmvdual}, with ++ denoting
algorithms with feedback. As we see, the general trend for
smoothness is the same for both finished and unfinished contracts.
\begin{table}[h]
\begin{center}
\caption{Comparison of HWM and DUAL on the first data set}{%
\begin{tabular}{|l|c|c|c|c|c|}

\hline
Algorithm &  Delivery & $\sigma^{75}$ & $\sigma^{95}$ & $\sigma^{75}$ \\
& Imprv. & (finished) & (finished) & (unfinished) \\
\hline \hline
HWM++ & --- & --- & --- & --- \\
DUAL++ & +4.82\% & +16.96\% & +2.44\% & +3.36\% \\
HWM & -43.89\% & +143.79\% & +367.30\% & +107.62\%\\
DUAL & -3.85\% & +173.16\% & +375.44\% & +126.89\%  \\
\hline
\end{tabular}}
\label{tab:hwmvdual}
\end{center}
\end{table}

We also ran the comparison between HWM and DUAL on our larger data
set.  Here, we compare only the feedback algorithms.  The results
are shown in Table~\ref{tab:hwmvdual2}.

\begin{table}[h]
\begin{center}
\caption{Comparison of HWM and DUAL on the second data set}{%
\begin{tabular}{|l|c|c|c|c|c|}

\hline
Algorithm &  Delivery & $\sigma^{75}$ & $\sigma^{95}$ & $\sigma^{75}$ \\
& Imprv. & (finished) & (finished) & (unfinished) \\
\hline \hline
HWM++ & --- & --- & --- & --- \\
DUAL++ & -11.79\% & +95.24\% & +155.99\% & +95.24\%  \\
\hline
\end{tabular}}
\label{tab:hwmvdual2}
\end{center}
\end{table}

Clearly, the algorithms perform much better, both in terms of
under-delivery and smoothness, when using feedback.  However, we see
two somewhat surprising things.  First, the under-delivery rate for
both HWM and DUAL are comparable.  In fact, HWM does somewhat better
in the second data set.  Note that both algorithms have very low
under-delivery rates (although we cannot display the absolute values
for confidentiality reasons), hence a difference of 5 or 10\%
translates to a very small absolute difference.

Second, DUAL is actually worse in terms of smoothness, despite the
first term of its objective. Indeed, we see the first term
of the DUAL objective (which attempts smoothness over all
impressions) evaluates to a better value for the online DUAL
allocation than the online HWM allocation. Yet the temporal
smoothness that we report is worse.

Our belief is that this has three contributing factors: (1) The
first term of the DUAL objective attempts smooth allocation across
all impressions, with time being just one dimension among many (with
others, like gender, age, interests, and so on being just as
important), (2) HWM's first-come-first-served approach tends to give
many contracts very smooth allocation, with the contracts allocated
near the end having potentially blocky distributions, while DUAL
tends to distribute non-smoothness among all contracts, and (3)
supply forecasting errors make it difficult for an optimal offline
solution to translate to a truly smooth online allocation.

In general, many of the advantages of generating an optimal serving
plan using a method like DUAL are mitigating because of forecasting
errors and the feedback needed to correct for those errors. Thus,
HWM provides a very practical solution in terms of speed and
performance.

\section{Related Work}
\label{sec:related}
There are three major related problems in the
context of guaranteed display advertising: the allocation problem, the
booking problem, and the ad serving problem (the focus of this
paper). In the allocation problem, the goal is to find an allocation
of supply (user visits) and demand (contracts) so as to meet an
optimization objective. In this context, 
\cite{Ghosh} and
~\cite{Informs} propose various objectives to smooth the
allocation of supply to demand so as to avoid the ``lumpy'' solutions
characteristic of simple linear programming
formulations~\cite{Nakamura}. In the booking problem, the goal is to
efficiently determine whether a new contract can be booked without
violating the guarantees of existing booked contracts.
The work of ~\cite{Feige}, \cite{Aleai}, \cite{Negruseri} and~\cite{Radovanovic}
consider various techniques and relaxations to solve this problem
efficiently.

The guaranteed ad serving problem can be viewed as an online
assignment problem, where some information is known about the future
input, but the input is revealed in an online fashion one vertex at a
time, and the goal is to optimize for an allocation objective. As such,
this is related to the literature on online matching. In a celebrated
result,~\cite{Karp} showed that a simple randomized
algorithm that finds a matching of size $n(1 - \frac{1}{e})$. However,
Karp et al.'s solution only works for simple matching, requires
maintaining counters across distributed servers, and does not apply to
more general allocation objectives that are considered in display
advertising. Feldman et al.~\cite{Feldman} extend Karp et al.'s
results and show that if some information about the future is
available, the $n(1 - \frac{1}{e})$ bounds can be further
improved. The proposed solution works by solving two problems offline,
and using both solutions online. However, the entire solution graph,
including both supply and demand nodes and the edges, has to be sent
to the online server. Consequently, the plan as described is not compact, cannot be
based on sampled user visits, and cannot generalize to unseen user
visits - all of which are crucial requirements for guaranteed ad
serving.

Another area of work that is related to the online allocation problem
is two-stage stochastic optimization with recourse. In this model, the
future arrival distribution is assumed to be known in advance, and the
goal is to find an allocation that best matches the distribution.
However, all of the algorithms (e.g.,~\cite{Eli}) assume that the
input is revealed in two stages, at first just the distribution of the
future input, and then the full input is revealed. Consequently, these
algorithms do not work in an online fashion, and the solution computed
in the first stage does not generalize to unseen input.

The ad serving solution presented in this paper builds upon a
mathematical result by Vee et al.~\cite{EC} that shows that for a
fairly general set of allocation objectives, the {\em optimal}
solution (except for sampling errors) can be compactly encoded as
just the dual values of the demand constraints.  Note that the
approach of Devanur and Hayes~\cite{DH09}, which also uses
dual-based methods, does not apply in our setting since it cannot
handle ``at least'' constraints.

We leverage the result of~\cite{EC} and propose a novel and
efficient High Water Mark (HWM) algorithm that can be used to
rapidly solve the allocation problem for large graphs and produce a
compact, generalizable, rate-based allocation plan for a distributed
ad server. Further, we introduce a rapid feedback loop, and show how
this can form a theoretical basis for dealing with supply
forecasting errors using the HWM algorithm. Finally, we also present
an experimental evaluation of the proposed approach using real-world
data.

\section{Conclusion \& Future Work}
\label{sec:conclusion} We have presented a new ad serving system
that relies on a compact allocation plan to decide which matching
contract to show for every user visit. This plan-based approach is
stateless, generalizable, allows for very high throughput and at the
same time performs better than the current quota-based system. While
the approach is sensitive to supply forecast errors, we have shown
that frequent re-optimization helps correct supply forecast errors,
and diminishes the impact that these errors have on delivery.

These rate-based algorithms show clear improvement in under-delivery
for ad serving, while maintaining comparable smoothness.  However,
there is no clear winner between HWM and DUAL. %
Although DUAL performs better under perfect forecasts, its
performance in practice is severely impacted by forecasting errors
and feedback corrections. Thus, HWM is quite attractive due to its lightweight
implementation and speed, coupled with good overall performance.

\bibliographystyle{plain}
\bibliography{as}
\end{document}